\newcommand{\Corr}[1]{#1}
\newtheorem{theorem}{Theorem}
\newtheorem{lemma}{Lemma}
\newtheorem{proposition}{Proposition}
\newtheorem{conjecture}{Conjecture}
\newtheorem{corollary}{Corollary}
\acrodef{cdf}[CDF]{cumulative distribution function}
\acrodef{bs}[BS]{base station}
\acrodef{ue}[UE]{user equipment}
\acrodef{los}[LOS]{line-of-sight}
\acrodef{aoa}[AoA]{angle-of-arrival}
\acrodef{aod}[AoD]{angle-of-departure}
\acrodef{toa}[ToA]{time-of-arrival}
\acrodef{tdoa}[TDoA]{time-difference-of-arrival}
\acrodef{ris}[RIS]{reconfigurable intelligent surface}
\acrodef{tx}[Tx]{transmitter}
\acrodef{rx}[Rx]{receiver}
\acrodef{crb}[CRB]{Cram\'er-Rao lower bound}
\acrodef{rss}[RSS]{received signal strength}
\acrodef{los}[LOS]{line-of-sight}
\acrodef{nlos}[NLOS]{non line-of-sight}
\acrodef{dft}[DFT]{discrete Fourier transform}
\acrodef{fft}[FFT]{fast Fourier transform}
\acrodef{fim}[FIM]{Fisher information matrix}
\acrodef{upa}[UPA]{uniform planar array}
\acrodef{peb}[PEB]{position error bound}
\acrodef{snr}[SNR]{signal-to-noise ratio}
\acrodef{sre}[SRE]{smart radio environment}
\acrodef{mimo}[MIMO]{multiple-input  multiple-output}
\acrodef{miso}[MISO]{multiple-input  single-output}
\acrodef{rf}[RF]{radio-frequency}
\acrodef{qos}[QoS]{quality-of-service}
\acrodef{rfid}[RFID]{radio-frequency identification}
\acrodef{ofdm}[OFDM]{orthogonal frequency-division multiplexing}
\acrodef{pin}[PIN]{positive-intrinsic negative}
\acrodef{bh}[BH]{Butson-type Hadamard}
\acrodef{csi}[CSI]{channel state information}
\begin{document}
\bstctlcite{IEEEexample:BSTcontrol}

\title{Multi-RIS Discrete-Phase Encoding for Interpath-Interference-Free Channel Estimation}

\author{Kamran Keykhosravi,~\IEEEmembership{Member,~IEEE,}
        and Henk Wymeersch,~\IEEEmembership{Senior Member,~IEEE}
	\thanks{ This work was supported, in part, by the Swedish Research Council under grant 2018-03701 and the EU H2020 RISE-6G project under grant 101017011.}
\thanks{The authors are with Chalmers University of Technology, Department of Electrical Engineering, Sweden.}}

\maketitle

\begin{abstract}
Reconfigurable intelligent surfaces (RISs) are one of the foremost technological enablers of future wireless systems. They improve  communication and localization by providing a strong non-line-of-sight path to the receiver. In this paper, we propose a pilot transmission method to enable the receiver to separate signals arriving from different RISs and from the uncontrolled multipath. This facilitates channel estimation and localization, as the channel or its geometric parameters can be estimated for each path separately. Our method is based on designing temporal phase profiles that are orthogonal across RISs without affecting the RIS beamforming capabilities. We take into consideration the limited resolution of the RIS phase shifters and show that in the presence of this practical limitation, orthogonal phase profiles can be designed based on Butson-type Hadamard matrices. For a localization scenario, we show that with our proposed method the estimator can attain the theoretical lower bound even with one-bit RIS phase resolution. 
\end{abstract}
\begin{IEEEkeywords}
Reconfigurable intelligent surfaces, analog beamforming, interference mitigation, Butson-type Hadamard matrices, channel estimation, localization.
\end{IEEEkeywords}
\IEEEpeerreviewmaketitle
\vspace{-5mm}
\section{Introduction}

\IEEEPARstart{O}{ne} of the most-promising emerging technologies in 6G communication systems are \acp{ris}, which enable the 
control and optimization of the propagation channel, thereby enabling or boosting radio communication, localization, and sensing \cite{bjornson2021reconfigurable}.
An \ac{ris} comprises a large number of passive elements, with sub-wavelength inter-element spacing. Each element scatters the impinging signal after applying a phase shift to it. In order to improve the communication \ac{qos} in a wireless network comprising one or more \acp{ris}, the \acp{ris}' phase profiles should be jointly optimized \cite{abrardo2020intelligent}. Performing such optimization requires the \ac{csi} for the channel 
corresponding to each individual \ac{ris}. Furthermore, in localization, parameter estimation for each of the controlled channels as well as the uncontrolled one is often essential.  Obtaining the per-channel \ac{csi} entails resolving the interpath interference at the \ac{rx}, which is the topic of this paper. This task is  challenging  since \emph{i)}  unlike more conventional technologies such as relays, \acp{ris} are often passive and are not equipped with a \ac{rf} chain, thus they cannot transmit, receive, process, or amplify signals \cite{wu2021intelligent} \emph{ii)} \acp{ris} can only apply discrete phase shifts to the impinging signal \cite{pan2021reconfigurable}.

 Channel estimation for \ac{ris}-aided wireless systems has been studied in a number of papers (see \cite[Sec.\,VI-G]{di2020smart}). In order to estimate the channel from an individual \ac{ris}, it is generally assumed that the uncontrolled multipath is blocked (see e.g., \cite{alexandropoulos2020hardware})
 or that a \ac{ris} can be ``switched off''  (i.e., operate in absorption mode), thereby avoiding interpath interference (see e.g., \cite{mishra2019channel}).
 In \cite{jensen2020optimal,you2020channel}, the channel for each \ac{ris} element is estimated by designing the \ac{ris} phase profile according to orthogonal sequences. In all these studies the wireless system is aided by a single \ac{ris}. A limited number of papers consider channel estimation in multi-\ac{ris} systems: 
 in \cite{ning2020channel}, channel estimation is performed for a multi-\ac{ris} \ac{mimo} system via successive hierarchical beam sweeping for each of the \acp{ris}. In \cite{alexandropoulos2020phase} explicit multi-\ac{ris} channel estimation is bypassed by maximizing the achievable rate directly using a supervised learning technique.

Similarly as in  channel estimation, in multi-RIS  localization it has been considered switching off  \acp{ris} to avoid interpath interference \cite{wang2021joint}.  In \cite{keykhosravi2021semi}, a semi-passive localization scenario has been studied, where each of the users are equipped with an \ac{ris} and orthogonal \ac{ris} phase profiles are used to resolve the interpath interference. \acp{crb}  on the localization accuracy have been derived in \cite{elzanaty2020reconfigurable} for multi-\ac{ris} \ac{mimo} systems, however no estimation method is provided. In \cite{keykhosravi2020siso}  localization is performed via successive interference cancellation, assuming that the direct channel is much stronger than the channel reflected from an \ac{ris}.

In this paper, we tackle the problem of interpath interference during pilot transmission for a generic multi-\ac{ris} communication or localization scenario.
Our specific contributions are as follows: \textit{(i)} we design temporally coded \ac{ris} phase profiles with finite resolution, such that the signal received by different paths are mutually orthogonal, based on \ac{bh} matrices; \textit{(ii)} we show that these codes do not affect the chosen beamforming \ac{ris} phase profile, though they introduce a delay; \textit{(iii)} to address the delay, we study the problem of finding the shortest orthogonal code for finite resolution \ac{ris} using properties of \ac{bh} matrices; \textit{(iv)}
 to assess the effectiveness of our proposal, we consider the semi-passive localization scenario in \cite{keykhosravi2021semi} under limited \ac{ris} phase shifts resolution and show that  the \ac{crb}  bounds can be obtained with one bit phase resolution, while ignoring the  finite \ac{ris} phase resolution leads to  significant interpath interference.

\emph{Notation:} The vectors, which are columns by default, are represented by the lower-case bold letters and the matrices by the upper case bold ones. The $i$th element of vector $\bm{x}$ is shown by $[\bm{x}]_i$ and the element on the $i$th row and $j$th column of matrix $\mathbf{X}$ is shown by $[\mathbf{X}]_{i,j}$. Furthermore, $[\mathbf{X}]_{i}$ indicates the $i$th row of matrix $\mathbf{X}$. We show the the ceiling function by $\lceil\cdot\rceil$. We define the set $\mathbb{T}$ to represent the circle group and $\mathbb{T}_R$ to represent $\{e^{\jmath2\pi r/R}\}_{r=0}^{R-1}$. The  zero-mean  circularly  symmetric Gaussian  distribution with covariance  matrix $\bm{C}$ is represented by $\mathcal{CN}(\bm{0}, \bm{C})$. The Kronecker and Hadamard products are indicated by $\otimes$ and $\odot$ respectively. We define $\mathbf{X}^{\otimes n} = \underbrace{\mathbf{X} \otimes \dots \otimes \mathbf{X}}_{n \text{ times}}$ and $\mathbf{X}^{\otimes 0} =1$.

\section{System model}
We consider a general  \ac{mimo} scenario, where the \ac{tx} sends pilots with constant energy, for the purpose of channel estimation or user localization. The signal from the \ac{tx} is received  via the uncontrolled multipath and  controlled multipath from the $K\ge 1$ \acp{ris}. We ignore multi-bounce effects from two or more \acp{ris}. The received signal at $t$th transmission can be represented as
\begin{align}\label{eq:chMod1}
   \mathbf{y}_t= \Big(\mathbf{H}_0 + \sum\limits_{k=1}^K \mathbf{H}_k(\bm{\gamma}_{k,t})\Big)\sqrt{E_{\mathrm{s}}} +\bm{\nu}_{t}.
\end{align}
Here, $\mathbf{H}_0\in\mathbb{C}^{N_{\mathrm{rx}}\times N_{\mathrm{tx}}}$ represents the uncontrolled channel (which may or may not be blocked), where $N_{\mathrm{rx}}$ and $ N_{\mathrm{tx}}$ are the \ac{rx} and \ac{tx} antenna count. The channel from \ac{tx} to the $k$th \ac{ris} to the \ac{rx} is shown by $\mathbf{H}_k(\bm{\gamma}_{k,t})$, where $\bm{\gamma}_{k,t}\in \mathbb{T}^{N_\mathrm{ris}}_R$ represents the phase profile at time $t$ of  \ac{ris} $k$ with $N_\mathrm{ris}$ being the number of unit cells per \ac{ris} and $R \in \mathbb{N}_{>1}$ being the \acp{ris} phase resolution. The transmitted pilots are assumed to identical to $\sqrt{E_{\mathrm{s}}}$. Finally, $\bm{\nu}_{t}\sim \mathcal{CN}(\bm{0},N_0\bm{I}_{N_{\mathrm{rx}}})$ represents the complex  additive zero-mean white noise.
The channel related to the $k$th \ac{ris} can be decomposed as
\begin{align}
    \mathbf{H}_k(\bm{\gamma}_{k,t})&= \mathbf{H}_{\mathrm{sr},k} \mathrm{diag}(\bm{\gamma}_{k,t})\mathbf{H}_{\mathrm{ts},k},
\end{align}
where $\mathbf{H}_{\mathrm{sr},k} \in \mathbb{C}^{N_{\mathrm{rx}}\times N_{\mathrm{ris}}}$ and $\mathbf{H}_{\mathrm{ts},k} \in \mathbb{C}^{N_{\mathrm{ris}}\times N_{\mathrm{tx}}}$ represent the channels from \ac{ris} to \ac{rx} and from \ac{tx} to \ac{ris}, respectively. One can observe that for any $h\in\mathbb{T}_R$, we have
\begin{align}\label{nlos:h}
    \mathbf{H}_k(h\bm{\gamma}_{k,t}) = h\mathbf{H}_k(\bm{\gamma}_{k,t}).
\end{align}
Finally, we assume that the system designer has chosen desired phase profile $\bm{\zeta}_{k,q} \in \mathbb{T}^{N_{\mathrm{ris}}}_R$ for each \ac{ris} $k$, for transmissions times $q\in \{0,\ldots,Q-1\}$, $Q\ge 1$. The choice of $\bm{\zeta}_{k,q}$ and $Q$ is arbitrary, and could correspond, e.g., to $Q$ different beams that are directed to the \ac{rx} (e.g., if some prior knowledge of the \ac{rx} position is available); or to cover a plurality of users; or random beams to provide omnidirectional illumination. 

Our proposed design (see Section \ref{sec:RIS_phase_profile_design}) is applicable to all the channel models that can be described by \eqref{eq:chMod1}--\eqref{nlos:h}, which includes most of the \ac{ris} channel models in the literature, whether narrow-band or wide-band, near-field or far-field, parametric or unstructured, single-RIS or multi-RIS, and with or without uncontrolled multipath.

\section{RIS phase profile design}\label{sec:RIS_phase_profile_design}

\subsection{Slow and fast \ac{ris} phase profiles}
We design the RIS phase profile by dividing the total transmission time $T$, into $P$ intervals with $Q$ symbols, i.e., we set $T = P\times Q$. Here, $P$ is a design parameters, which will be discussed later. We represent the \ac{ris} phase profile as
\begin{align}\label{eq:gama_kt}
    \bm{\gamma}_{k,t} = \beta_{k,p} \bm{\zeta}_{k,q},
\end{align}
where $p=t \, \mathrm{mod}\, P$ and $q = (t-p)/P$, so we have that $t=qP+p$. We call $\beta_{k,p}\in \mathbb{T}_R$ the \emph{fast-varying} part of the \ac{ris} phase profile and $\bm{\zeta}_{k,q} \in \mathbb{T}^{N_{\mathrm{ris}}}_R$ the \emph{slow-varying} part. Notice that since $\mathbb{T}_R$ is closed under multiplication, the $\bm{\gamma}_{k,t}$ in \eqref{eq:gama_kt} is in $\mathbb{T}^{N_{\mathrm{ris}}}_R$.

While the slow-varying parts are predetermined, we will design the fast varying part such that, for all $k=1\dots, K$, $k'\neq k$, we have 
\begin{align}
    &\sum\limits_{p=0}^{P-1} \beta_{k,p}\beta^*_{k',p} =0\,\, \text{and}\,\, \sum\limits_{p=0}^{P-1} \beta_{k,p}=0 \label{eq:beta1}\\
    &{\beta}_{k,p}  \in \mathbb{T}_R, \forall k,p \label{eq:beta3}.
\end{align}
With this design, in the \ac{rx} we first multiply the received signal with the conjugate of the transmitted pilot to obtain $\mathbf{w}_t = \mathbf{y}_t $. Then we calculate
\begin{align}
    \mathbf{z}_{k,q} &= \sum\limits_{p=0}^{P-1}\beta^*_{k,p} \mathbf{w}_{qP+p} =P \sqrt{E_{\mathrm{s}}} \mathbf{H}_k(\bm{\zeta}_{k,q})\mathbf{1}_{N_{\mathrm{tx}}} + \bm{\nu}_{k,q}\label{eq:zkq}\\
    \mathbf{z}_{0,q} &= \sum\limits_{p=0}^{P-1} \mathbf{w}_{qP+p} =P \sqrt{E_{\mathrm{s}}} \mathbf{H}_0\mathbf{1}_{N_{\mathrm{tx}}} + \bm{\nu}_{0,q},\label{eq:zkq}
\end{align}
where, $\mathbf{1}_{N_{\mathrm{tx}}}$ denotes the all-one vector of length $N_{\mathrm{tx}}$. Vectors $\bm{\nu}_{k,q}$ and $\bm{\nu}_{0,q}$ \Corr{represent the additive noise with distribution $\mathcal{CN}(\bm{0},PN_0\bm{I}_{N_{\mathrm{rx}}})$}. It can be seen that  in the observations $\mathbf{z}_{k,q}$ all interpath interference is eliminated and that each path can be estimated separately. The question remains how to design the fast-varying part to satisfy \eqref{eq:beta1}--\eqref{eq:beta3}, while simultaneously keeping $P$ as small as possible. This will be discussed next.

\subsection{Designing the fast-varying part}\label{sec:fast_varying}

In this section, we show how to find a feasible  solution of $\beta_{k,p}\in \mathbb{T}_{R}$ that satisfy \eqref{eq:beta1}--\eqref{eq:beta3} for a given $K$ and $R$. Also, we consider the problem of finding $P^*$, the minimum value of $P$ that allows a feasible solution. We refer to such solution as the \emph{optimal solution}.
We will now define several concepts, based on which we show for which cases an optimal solution can be found, while providing a simple method for finding a feasible solution for the general case.

We first introduce $\mathrm{BH}(P,R)$ as the set of \ac{bh} matrices of order $P$ and complexity $R$, that is all matrices  $\bm{\Upsilon}\in \mathbb{T}^{P\times P}_R$ that satisfy $\bm{\Upsilon} \bm{\Upsilon}^{\mathrm{H}} = P\mathbf{I}_P$, where $\mathbf{I}_P$ is a $P\times P$ identity matrix. In general, these sets may be empty and finding elements in $\mathrm{BH}(P,R)$ or characterizing the entire set is an open problem. However, in \cite{lampio2020orderly}, a numerical method is introduced to search for BH matrices; the authors publish a list of discovered BH matrices on their Wiki page\footnote{\url{https://wiki.aalto.fi/display/Butson}}. 

Secondly, we represent the prime factorization of $R$ by
\begin{align}\label{eq:primeFac}
    R = p_1^{e_1}\times p_2^{e_2}\times \dots \times p_L^{e_L},
\end{align} 
where we assume that $p_1< p_2 < \dots < p_L$. 

Thirdly, we define the vectors $\bm{\beta}_k = [\beta_{k,0}, \beta_{k,1}, \dots, \beta_{k,P-1}]^\top$ and $\bm{\beta}_0 = \mathbf{1}_P$. Then the conditions \eqref{eq:beta1} 
can be rewritten as 
\begin{align}
\bm{\beta}_0 &= \mathbf{1}_P\label{eq:bVecCondition1}\\
    \bm{\beta}_k^{\mathrm{H}} \bm{\beta}_{k'} &= P \mathbbm{1}(k=k'),\label{eq:bVecCondition}
\end{align}
where $\mathbbm{1}(\cdot)$ is the indicator function. Furthermore, we let $\mathbf{B}\in \mathbb{T}_{R}^{(K+1)\times P} $ be $[\bm{\beta}_0^\top, \dots, \bm{\beta}_{K}^{\top}]^{\top}$ and rewrite the conditions \eqref{eq:beta1} 
as
\begin{align}
    [\mathbf{B}]_1 &= \mathbf{1}_P^\top\label{eq:bMatCondition1}\\
    \mathbf{B}\mathbf{B}^{\mathrm{H}} &= P \mathbf{I}_{K+1}.\label{eq:bMatCondition2}
\end{align}

\subsection{Special cases}
Before dealing with the general case, we consider five  specific cases of practical importance for which an optimal solution to \eqref{eq:beta1}--\eqref{eq:beta3} can be found.
\subsubsection{Case \emph{I}: Infinite resolution: $\beta_{k,p}\in \mathbb{T}$}\label{sec:infb}
With infinite phase resolution, we can find a feasible solution by setting $\bm{\beta}_k^\top$ to be the $k$th row of the $P\times P$ \ac{dft} matrix $\mathbf{F}_P$, i.e.,
\begin{align}\label{eq:binf}
    \beta_{k,p} = [\mathbf{F}_P]_{k,p} = e^{-\jmath 2\pi kd /P}.
\end{align}
The condition \eqref{eq:bVecCondition} holds since the \ac{dft} matrix is unitary.   This also shows, by construction, that $P^*\le K+1$. Via the following lemma, we prove that the minimum duration solution is $P^*=K+1$. 

\begin{lemma}\label{lemma:kPlus1}
If $P<K+1$, then there is no solution  for $\bm{\beta}_k \in \mathbb{T}_R^{P}$ that satisfies \eqref{eq:bVecCondition} for $k=0, \dots, K$.
\end{lemma}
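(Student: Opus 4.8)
The plan is to recognize condition \eqref{eq:bVecCondition} as a statement of mutual orthogonality and then invoke the elementary fact that a $P$-dimensional complex vector space cannot contain more than $P$ nonzero pairwise-orthogonal vectors.

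First I would use the diagonal case $k=k'$ of \eqref{eq:bVecCondition}, namely $\bm{\beta}_k^{\mathrm{H}}\bm{\beta}_k = P$, to conclude that (since $P\ge 1$) each $\bm{\beta}_k$ is a \emph{nonzero} vector of $\mathbb{C}^P$; the constraint $\bm{\beta}_k\in\mathbb{T}_R^P$ from \eqref{eq:beta3} gives the same conclusion, as all entries have unit modulus. Next, the off-diagonal cases of \eqref{eq:bVecCondition} say $\bm{\beta}_k^{\mathrm{H}}\bm{\beta}_{k'}=0$ for $k\neq k'$, i.e., the family $\{\bm{\beta}_0,\dots,\bm{\beta}_K\}$ is pairwise orthogonal with respect to the standard Hermitian inner product on $\mathbb{C}^P$. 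The standard argument — take a vanishing complex linear combination $\sum_k c_k\bm{\beta}_k=\mathbf{0}$, pair it with each $\bm{\beta}_j$, and use orthogonality together with $\bm{\beta}_j^{\mathrm{H}}\bm{\beta}_j=P\neq 0$ to get $c_j=0$ — shows the family is linearly independent over $\mathbb{C}$. Hence $\mathbb{C}^P$ contains $K+1$ linearly independent vectors, which forces $K+1\le\dim_{\mathbb{C}}\mathbb{C}^P=P$, contradicting the hypothesis $P<K+1$. Equivalently, in matrix form one may note that a solution would make $\mathbf{B}\in\mathbb{C}^{(K+1)\times P}$ with $\mathbf{B}\mathbf{B}^{\mathrm{H}}=P\mathbf{I}_{K+1}$, so $\mathrm{rank}(\mathbf{B})=K+1$, while $\mathrm{rank}(\mathbf{B})\le\min(K+1,P)=P<K+1$.

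I do not expect any real obstacle here: the statement is pure linear algebra, and it does not use the phase-resolution parameter $R$ at all beyond the harmless fact that unit-modulus entries make the vectors nonzero. The only points to be careful about are to carry out the independence argument over $\mathbb{C}$ (not $\mathbb{R}$) and to use the Hermitian inner product consistently, so that ``orthogonal'' genuinely implies ``independent.''
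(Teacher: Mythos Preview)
Your proposal is correct and follows essentially the same approach as the paper: both argue that the $K+1$ vectors $\bm{\beta}_k$ lie in $\mathbb{C}^P$ and that condition~\eqref{eq:bVecCondition} forces them to be linearly independent, which is impossible when $K+1>P$. Your write-up simply spells out the orthogonality-implies-independence step (and the equivalent rank argument) that the paper leaves implicit.
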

\begin{proof}
 Since $\bm{\beta}_k\in \mathbb{C}^{P}$ then, if $K+1>P$, vectors $\bm{\beta}_k$ must be linearly dependent. This is in contradiction to \eqref{eq:bVecCondition}.
\end{proof}

\subsubsection{Case \emph{II}: $p_1\geq K+1$}\label{sec:caseII}
We now show that in this case, $P^* = p_1$. To do so, we first present a solution for $\bm{\beta}_k$ with $P=p_1$ and then prove that no solutions exist for $P<p_1$. The first part is done by setting  $\bm{\beta}_k^\top$ to the $k$th row of $\mathbf{F}_{p_1}$. Since $\mathbb{T}_{p_i} \subseteq \mathbb{T}_{R}$, this yields a valid solution.
To prove the second part, we use the following proposition.

\begin{proposition}\label{prob:1}
For any $R>1$ with the prime factorization \eqref{eq:primeFac}, let $\mathcal{W}(R)$ be the set of all positive integers $M$ for which there exists $\bm{\alpha} \in \mathbb{T}_R^M$ such that $\bm{\alpha}^\top \mathbf{1}_M=0$. Then 
\begin{align}\label{eq:Wdef}
    \mathcal{W}(R) = \Big\{\sum_{i=1}^Lm_ip_i\, \vert\, m_i\in\mathbb{N} \Big\} \setminus \{0\}.
\end{align}
\end{proposition}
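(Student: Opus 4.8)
The plan is to prove the two inclusions separately. For the inclusion $\mathcal{W}(R) \subseteq \{\sum_{i=1}^L m_i p_i : m_i \in \mathbb{N}\} \setminus \{0\}$, I would argue that if a vanishing sum of $M$ roots of unity from $\mathbb{T}_R$ exists, then $M$ must be expressible as a nonnegative integer combination of the prime divisors $p_1, \dots, p_L$ of $R$. This is a classical result on vanishing sums of roots of unity: any sum of $R$th roots of unity that equals zero can be decomposed into a sum of ``rotated'' copies of the full sets of $p_i$th roots of unity (each such full set sums to zero), so the total number of terms $M$ is a $\mathbb{N}$-combination of the $p_i$'s. I would cite the relevant number-theoretic fact — e.g. the Lam--Leung theorem on vanishing sums of roots of unity — which states precisely that if $\sum_{j=1}^M \omega_j = 0$ with each $\omega_j$ an $R$th root of unity, then $M \in \mathbb{N}p_1 + \dots + \mathbb{N}p_L$. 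The trivial case $M=0$ is excluded because $\mathcal{W}(R)$ consists of positive integers by definition.

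For the reverse inclusion, I would show that every $M = \sum_{i=1}^L m_i p_i$ with $m_i \in \mathbb{N}$ and $M \neq 0$ actually arises as the length of some vanishing sum. The construction is direct: if $\omega_i = e^{\jmath 2\pi/p_i}$ is a primitive $p_i$th root of unity (which lies in $\mathbb{T}_R$ since $p_i \mid R$), then $\sum_{r=0}^{p_i-1} \omega_i^r = 0$. Concatenating $m_i$ copies of the vector $(1, \omega_i, \dots, \omega_i^{p_i-1})$ for each $i$ — skipping any $i$ with $m_i = 0$ — yields a vector $\bm{\alpha} \in \mathbb{T}_R^M$ with $\bm{\alpha}^\top \mathbf{1}_M = \sum_i m_i \cdot 0 = 0$. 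Since $M \geq 1$, at least one $m_i$ is positive, so $\bm{\alpha}$ is nonempty and the construction is well-defined.

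The main obstacle is the forward direction: proving that a vanishing sum of $R$th roots of unity forces $M$ into the numerical semigroup generated by $\{p_1, \dots, p_L\}$. This is not elementary — it is exactly the content of the Lam--Leung theorem (building on earlier work by de Bruijn, Rédei, and others), and the cleanest route is to invoke it rather than reprove it. If a self-contained argument is desired, one can proceed by induction on $M$ using the minimal polynomial / cyclotomic structure: group the roots of unity, extract a primitive $p$th root's worth of terms (a full rotated $p$-cycle summing to zero) for some prime $p \mid R$, remove them, and recurse on the remaining $M - p$ terms — but verifying that such a $p$-cycle can always be extracted is the technical heart and relies on arguments about the rank of the relevant $\mathbb{Z}$-module of relations. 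I would flag this as the point where the proof leans on the literature.
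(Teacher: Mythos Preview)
Your proposal is correct and matches the paper's approach: the paper simply cites Lam and Leung's result on vanishing sums of roots of unity (the reference \cite{lam2000vanishing}), which is exactly the theorem you invoke for the forward inclusion, and your explicit construction for the reverse inclusion is the standard one underlying that same result. You have in fact given more detail than the paper, which defers the entire proof to the citation.
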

\begin{proof}
See  \cite{lam2000vanishing}.
\end{proof}
Now, assume that $P < p_1$ and that $\bm{\beta}_k \in \mathbb{T}^P_R$ satisfies \eqref{eq:bVecCondition1}--\eqref{eq:bVecCondition}. Then, $\bm{\beta}_1^\top \mathbf{1}_P =0$, which,  according to Proposition~\ref{prob:1}, implies that $P\in \mathcal{W}(R)$. However, from \eqref{eq:Wdef}, the minimum member of $\mathcal{W}(R)$ is $p_1$ and this is in contradiction with the assumption that $P < p_1$.

\subsubsection{Case \emph{III}: $R=2$}\label{sec:case:R2}
In this case, $\bm{\beta}_k$ are real vectors, and therefore we should consider (real) Hadamard matrices in order to find a solution. If $K=1$, one can check that $\mathbf{F}_2$ is an optimal solution. For higher values of $K$, we use the following proposition. 
\begin{proposition}\label{prop:2}
If $\mathrm{BH} (P,2)$ is non-empty, and $P>2$ then a positive integer $\kappa$ exists such that $P=4\kappa$.
\end{proposition}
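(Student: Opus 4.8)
The plan is to prove the classical fact that a real Hadamard matrix of order greater than $2$ must have order divisible by $4$, specialized to the $\mathrm{BH}(P,2)$ setting where entries lie in $\mathbb{T}_2 = \{+1,-1\}$. First I would pick any $\bm{\Upsilon} \in \mathrm{BH}(P,2)$ with $P>2$. By rescaling each row by $\pm 1$ (which preserves the orthogonality relation $\bm{\Upsilon}\bm{\Upsilon}^{\mathrm{H}} = P\mathbf{I}_P$), I may assume the first row is $\mathbf{1}_P^\top$; similarly, by rescaling each column, I may assume the first column is $\mathbf{1}_P$. Then I focus on three rows: the all-ones first row and two further rows, say $[\bm{\Upsilon}]_i$ and $[\bm{\Upsilon}]_j$ with $2 \le i < j \le P$ (these exist precisely because $P>2$).

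The key step is a counting argument on the column patterns restricted to these three rows. Each of the $P$ columns, when restricted to rows $1,i,j$, is one of four sign patterns, which I would label by how the $i$th and $j$th entries compare to the first: $(+,+)$, $(+,-)$, $(-,+)$, $(-,-)$, with multiplicities $a,b,c,d$ respectively, so $a+b+c+d = P$. Orthogonality of row $1$ with row $i$ gives $a+b = c+d$; orthogonality of row $1$ with row $j$ gives $a+c = b+d$; orthogonality of row $i$ with row $j$ gives $a-b-c+d = 0$, i.e. $a+d = b+c$. Solving these three linear equations forces $a = b = c = d = P/4$. In particular $P/4$ must be a positive integer, so $P = 4\kappa$ for some $\kappa \in \mathbb{N}_{>0}$, which is exactly the claim.

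I would present the three inner-product identities carefully, noting that the inner product of two $\pm1$ rows equals (number of agreements) minus (number of disagreements), and that agreements/disagreements are read off directly from the four multiplicities; this is where a small sign bookkeeping needs care but nothing deep is involved. The main obstacle — really the only subtlety — is justifying the normalization step: one must check that multiplying rows and columns of a $\mathrm{BH}(P,2)$ matrix by units of $\mathbb{T}_2$ keeps it in $\mathrm{BH}(P,2)$, which follows since such operations correspond to $\bm{\Upsilon} \mapsto \mathbf{D}_1 \bm{\Upsilon} \mathbf{D}_2$ with $\mathbf{D}_1, \mathbf{D}_2$ diagonal $\pm1$ (hence unitary up to the scalar already accounted for), so $(\mathbf{D}_1\bm{\Upsilon}\mathbf{D}_2)(\mathbf{D}_1\bm{\Upsilon}\mathbf{D}_2)^{\mathrm{H}} = \mathbf{D}_1 \bm{\Upsilon}\bm{\Upsilon}^{\mathrm{H}} \mathbf{D}_1 = P\mathbf{I}_P$. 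Everything else is the elementary linear system above, and the case $P>2$ is used only to guarantee that two rows beyond the first actually exist so that the third orthogonality equation is available.
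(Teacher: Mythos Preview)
Your proof is correct and follows the same classical three-row orthogonality argument the paper relies on: the paper cites \cite{laclair2016survey} for Proposition~\ref{prop:2} and then reproduces that very argument in Appendix~\ref{app:Thm1}, phrased via the half-sums $\Sigma_1,\Sigma_2$ of a third row rather than your four column-pattern counts $a,b,c,d$. The two presentations are trivially equivalent, and your normalization step (diagonal $\pm1$ scalings preserving $\mathrm{BH}(P,2)$) is exactly the content of Lemma~\ref{lemma:1} specialized to $R=2$.
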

\begin{proof}
See \cite[Sec.\,1.3.2]{laclair2016survey}.
\end{proof}
The inverse of Proposition~\ref{prop:2} reads as follows. 
\begin{conjecture}[Hadamard Conjecture]\label{conj:1}
For any positive integer $\kappa$, the set $\mathrm{BH} (4\kappa,2)$ is non-empty.
\end{conjecture}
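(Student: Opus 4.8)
The plan is to proceed constructively: for each $\kappa$ exhibit a matrix $\bm{\Upsilon}\in\mathbb{T}_2^{4\kappa\times 4\kappa}$ with $\bm{\Upsilon}\bm{\Upsilon}^{\mathrm{H}}=4\kappa\mathbf{I}_{4\kappa}$, and then enlarge the set of attainable orders by a multiplicative trick. Two elementary observations do the bookkeeping. First, $\mathbf{F}_2\in\mathrm{BH}(2,2)$, and the Kronecker product of Butson matrices over a common alphabet is again Butson, because $(\mathbf{A}\otimes\mathbf{B})(\mathbf{A}\otimes\mathbf{B})^{\mathrm{H}}=(\mathbf{A}\mathbf{A}^{\mathrm{H}})\otimes(\mathbf{B}\mathbf{B}^{\mathrm{H}})$; hence $\mathbf{F}_2^{\otimes n}\in\mathrm{BH}(2^n,2)$ (the Sylvester construction), and more generally, if $\mathrm{BH}(a,2)$ and $\mathrm{BH}(b,2)$ are both non-empty then so is $\mathrm{BH}(ab,2)$. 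So it suffices to realize one representative in each order $4\kappa$ with $\kappa$ not already produced by products of smaller solutions.

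Next I would import the classical number-theoretic families. The Paley construction uses the quadratic-residue character $\chi$ of a finite field $\mathbb{F}_q$: bordering the Jacobsthal matrix $[\chi(x-y)]_{x,y\in\mathbb{F}_q}$ gives an element of $\mathrm{BH}(q+1,2)$ when $q\equiv 3\pmod 4$, and a doubled variant gives an element of $\mathrm{BH}(2(q+1),2)$ when $q\equiv 1\pmod 4$. Together with the Sylvester powers of $2$ and the multiplicativity above, this already settles a set of $\kappa$ of natural density $1$. To reach orders missed by Paley I would then invoke the Williamson array --- four symmetric circulant $\pm1$ blocks $A,B,C,D$ of order $m$ with $A^2+B^2+C^2+D^2=4m\mathbf{I}_m$, arranged in a fixed $4\times4$ pattern, yield an element of $\mathrm{BH}(4m,2)$ --- and its descendants (Baumert--Hall units, Goethals--Seidel and Turyn-type arrays), which plug many of the remaining gaps.

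Here, however, the argument stops being a proof. \textbf{Conjecture~\ref{conj:1} is a famous open problem}, going back to Hadamard (1893) and Paley (1933), and the step that would have to complete it --- supplying the orders that none of the known families reach --- is precisely the obstacle that has resisted every attempt for more than a century: there is no known uniform construction, no reduction to a finite computation, and no nonconstructive existence proof. Each available method covers only a thin (if density-one) family of orders; at the time of writing the smallest multiple of $4$ for which no Hadamard matrix is known is $668$. Consequently the most one can honestly claim is the partial result above --- Sylvester, Paley, and Williamson-type constructions, closed under Kronecker products, dispatch infinitely many (indeed asymptotically almost all) values of $\kappa$ --- and this is why the statement is recorded here as a conjecture and used in the sequel only for those orders $4\kappa$ with $\mathrm{BH}(4\kappa,2)$ known to be non-empty.
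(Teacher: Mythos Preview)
Your proposal correctly identifies that Conjecture~\ref{conj:1} is a famous open problem and cannot be proved; this matches the paper exactly, which presents it as a conjecture without proof and only remarks that it has been verified for $\kappa<167$, sufficient for the applications that follow. Your added sketch of the Sylvester, Paley, and Williamson-type constructions is accurate supplementary context but goes beyond what the paper states; the essential point --- no proof exists, and the conjecture is invoked only for orders where it is known to hold --- is the same.
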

This conjecture has been an open problem for over $100$ years. However, it has been proven for values $\kappa<167$ \cite{laclair2016survey}, which is enough for most practical purposes. Next, we present the following theorem on optimal solutions for $R=2$.
\begin{theorem}
Let $2<K<664$. Then the smallest value of $P$ that yields a solution for $\{\bm{\beta}_k\in\mathbb{T}_{2}^P\}_{k=0}^{K}$ that satisfies \eqref{eq:bVecCondition1} and \eqref{eq:bVecCondition} is  $ P^* = 4\lceil(K+1)/4\rceil$. 
\end{theorem}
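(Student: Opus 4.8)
The plan is to establish the two bounds $P^* \le 4\lceil (K+1)/4\rceil$ and $P^* \ge 4\lceil (K+1)/4\rceil$ separately. For the upper bound, set $P = 4\lceil (K+1)/4\rceil$; since $2 < K < 664$, we have $P = 4\kappa$ with $\kappa < 167$, so by the resolved cases of the Hadamard Conjecture (Conjecture~\ref{conj:1}, verified for $\kappa < 167$) the set $\mathrm{BH}(P,2)$ is non-empty. Pick any $\bm{\Upsilon} \in \mathrm{BH}(P,2)$. A real Hadamard matrix can always be normalized so that one of its rows is $\mathbf{1}_P^\top$ (negate each column whose first entry is $-1$; this preserves the Hadamard property and the entries stay in $\mathbb{T}_2 = \{+1,-1\}$). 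Then take $\mathbf{B}$ to be any $K+1$ rows of the normalized $\bm{\Upsilon}$ including that all-ones row; the rows are mutually orthogonal of squared norm $P$, so \eqref{eq:bVecCondition1}--\eqref{eq:bVecCondition} hold, giving $P^* \le P$.

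For the lower bound, suppose $\{\bm{\beta}_k \in \mathbb{T}_2^P\}_{k=0}^K$ satisfies \eqref{eq:bVecCondition1}--\eqref{eq:bVecCondition} and let $\mathbf{B} = [\bm{\beta}_0^\top,\dots,\bm{\beta}_K^\top]^\top \in \mathbb{T}_2^{(K+1)\times P}$. By Lemma~\ref{lemma:kPlus1} we need $P \ge K+1 > 3$. I would then argue that $P$ must be a multiple of $4$, which combined with $P \ge K+1$ forces $P \ge 4\lceil (K+1)/4\rceil$ and finishes the proof. The cleanest route to divisibility by $4$: complete $\mathbf{B}$ to a full $P \times P$ matrix whose rows form an orthogonal $\mathbb{T}_2$-system — but this requires the existence of a Hadamard matrix of order $P$, which we do not know in general. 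Instead I would prove directly that any three mutually orthogonal $\pm 1$ vectors of length $P$ (and $K \ge 3$ gives us at least $\bm{\beta}_0, \bm{\beta}_1, \bm{\beta}_2$, hence three such rows since $\bm{\beta}_0 = \mathbf{1}_P$) force $4 \mid P$: this is the standard "three Hadamard rows" lemma. Concretely, after permuting columns, group the coordinates by the sign pattern of $(\bm{\beta}_1, \bm{\beta}_2)$ relative to $\bm{\beta}_0 = \mathbf{1}$; orthogonality of each pair among the three vectors yields a linear system in the four block sizes whose only solution has all blocks equal to $P/4$, so $4 \mid P$.

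The main obstacle is precisely this divisibility step: one must be careful that Proposition~\ref{prop:2} as stated concerns full Hadamard matrices, not a $3 \times P$ sub-array, so it cannot be invoked as a black box — the three-row counting argument has to be carried out by hand (it is elementary but needs the explicit block decomposition). A secondary point worth a sentence is the upper-bound range: the hypothesis $K < 664$ is exactly what guarantees $\lceil (K+1)/4 \rceil \le 166 < 167$, so that the Hadamard Conjecture's verified range covers the order $P$ we need; if $K$ were larger we would only get $P^* \le P$ conditional on the conjecture. Everything else — normalizing a Hadamard matrix to have an all-ones row, selecting a subset of rows — is routine.
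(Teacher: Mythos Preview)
Your proposal is correct and follows essentially the same route as the paper: the upper bound via the verified cases of the Hadamard Conjecture together with column-normalization (the paper's Lemma~\ref{lemma:1}), and the lower bound via $P\ge K+1$ plus the classical ``three orthogonal $\pm1$ rows force $4\mid P$'' counting argument carried out by hand rather than by invoking Proposition~\ref{prop:2} as a black box. Your observation that Proposition~\ref{prop:2} applies only to full Hadamard matrices and that one must redo the three-row computation directly is exactly what the paper does as well.
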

\begin{proof}
See Appendix \ref{app:Thm1}.
\end{proof}
To generate the optimal solution one should assign $\bm{\beta}_k$ (for $k=0, \dots, K$) to the $k$th row of a matrix  $\mathrm{BH}(P^*, 2)$ that satisfies \eqref{eq:bMatCondition1}.

\subsubsection{Case \emph{IV}: $R=2^r$ with $r>1$}
We begin by stating a conjecture about the existence of $\mathrm{BH}(P,4)$ matrices \cite{seberry1973complex}.
\begin{conjecture}\label{conj:turyn}
For all positive integers $\kappa$, the set $\mathrm{BH}(2\kappa,4)$ is non-empty.
\end{conjecture}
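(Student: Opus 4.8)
The plan is to attack Conjecture~\ref{conj:turyn} by a constructive strategy that combines multiplicative closure of the Butson class with a catalogue of explicit building blocks; I stress at the outset that this is a genuinely open problem, so the plan can rigorously establish the claim only for a large but incomplete family of orders, and I will pinpoint where it stalls. It is convenient to first record the necessity of the even-order hypothesis, which also clarifies what the conjecture really asserts: for any two distinct rows of a matrix in $\mathrm{BH}(2\kappa,4)$, condition \eqref{eq:bMatCondition2} forces $\sum_{p}[\bm{\Upsilon}]_{k,p}[\bm{\Upsilon}]_{k',p}^{*}=0$, a vanishing sum of $2\kappa$ elements of $\mathbb{T}_4$. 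By Proposition~\ref{prob:1} with $R=4$ (so $L=1$, $p_1=2$) we have $\mathcal{W}(4)=\{2,4,6,\dots\}$, confirming that the order must be even; the conjecture is thus the converse (sufficiency) statement, and the whole difficulty lies in constructing, rather than excluding, such matrices.

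First I would establish closure under Kronecker products: if $\bm{\Upsilon}_1\in\mathrm{BH}(m,4)$ and $\bm{\Upsilon}_2\in\mathrm{BH}(n,4)$ then $\bm{\Upsilon}_1\otimes\bm{\Upsilon}_2\in\mathrm{BH}(mn,4)$, since $(\bm{\Upsilon}_1\otimes\bm{\Upsilon}_2)(\bm{\Upsilon}_1\otimes\bm{\Upsilon}_2)^{\mathrm H}=(\bm{\Upsilon}_1\bm{\Upsilon}_1^{\mathrm H})\otimes(\bm{\Upsilon}_2\bm{\Upsilon}_2^{\mathrm H})=mn\,\mathbf{I}_{mn}$ and $\mathbb{T}_4$ is closed under multiplication. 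Writing $\kappa=2^{a}m$ with $m$ odd and using $\mathbf{F}_2\in\mathrm{BH}(2,4)$ together with $\mathbf{F}_2^{\otimes a}\in\mathrm{BH}(2^{a},4)$, the identity $2\kappa=2^{a}\cdot(2m)$ shows that $\mathrm{BH}(2\kappa,4)\neq\emptyset$ follows from $\mathrm{BH}(2m,4)\neq\emptyset$. Hence it suffices to treat the orders $2m$ with $m$ odd, i.e. the residue class $2\pmod 4$, for every odd $m\ge 1$.

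Next I would assemble building blocks for these reduced orders. Orders that happen to be multiples of four are handled through the inclusion $\mathbb{T}_2\subseteq\mathbb{T}_4$, which embeds every real Hadamard matrix into $\mathrm{BH}(\cdot,4)$; by the Sylvester, Paley and Williamson constructions (and the tabulation in \cite{laclair2016survey}) this already yields infinitely many orders and all multiples of four below the range where Conjecture~\ref{conj:1} is verified, and the Kronecker step propagates them to composite orders. For the genuinely complex regime $2m$ with $m$ odd $>1$, where no real Hadamard matrix exists, the key explicit family comes from symmetric conference matrices: if $C$ is a symmetric conference matrix of order $q+1$ (so $C^{\mathrm T}=C$, zero diagonal, $\pm1$ off-diagonal, $CC^{\mathrm T}=q\mathbf{I}$), which Paley's construction supplies for every prime power $q\equiv 1\pmod 4$, then $\bm{\Upsilon}=C+\jmath\mathbf{I}$ has entries in $\mathbb{T}_4$ and satisfies $\bm{\Upsilon}\bm{\Upsilon}^{\mathrm H}=CC^{\mathrm T}+\mathbf{I}=(q+1)\mathbf{I}$, giving $\mathrm{BH}(q+1,4)\neq\emptyset$ for all such $q$. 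More general orders $2\pmod 4$ would be targeted through complex (quaternary) complementary Golay pairs of length $m$ placed in Goethals--Seidel or negacyclic block arrays, whose complementary-autocorrelation identity produces exactly the required relation, and any residual small orders would be certified by the numerical search and tabulated matrices of \cite{lampio2020orderly}.

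The hard part, and the reason the statement is phrased as a conjecture rather than a theorem, is precisely the class $2m$ with $m$ odd that is \emph{not} of the form $q+1$ for a prime power $q\equiv 1\pmod 4$ (e.g. $m=11$): no currently known construction produces a member of $\mathrm{BH}(2m,4)$ for \emph{every} odd $m$, because the complementary sequences or equivalent combinatorial designs that such constructions rely on are themselves only known to exist for infinite yet incomplete families of lengths. Closing this gap would demand either a uniform construction of quaternary complementary pairs valid for all odd lengths or a fundamentally new design, which lies beyond present techniques. Consequently, the realistic deliverable of this plan is Conjecture~\ref{conj:turyn} for all even orders below a large explicit bound together with every order generated multiplicatively from the verified building blocks, with the universal claim remaining open.
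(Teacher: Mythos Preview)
The paper does not prove Conjecture~\ref{conj:turyn}; it is recorded precisely as a conjecture, attributed to \cite{seberry1973complex}, with the sole remark that it has been verified for all $\kappa\le 32$, which suffices for the application to Theorem~2. There is therefore no ``paper's own proof'' to compare against.

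Your proposal is an accurate and technically sound survey of the known constructive machinery (Kronecker closure, the embedding $\mathrm{BH}(\cdot,2)\subseteq\mathrm{BH}(\cdot,4)$, the Paley conference-matrix construction $C+\jmath\mathbf{I}$ for prime powers $q\equiv 1\pmod 4$, and complementary-sequence block arrays), and you correctly reduce the problem to the residue class $2\pmod 4$ and identify where these tools run out. Your candour that the plan yields only a large but incomplete family, with the universal statement remaining open, is exactly right: this is why the paper states it as a conjecture rather than a theorem. In short, there is no gap in your reasoning beyond the one you explicitly flag, and that gap is the conjecture itself.
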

It was shown that this conjecture is true for all $\kappa\leq32$ \cite{seberry1973complex}, which is large enough for all the practical cases that we consider in this paper. The following corollary follows from the fact that for all $r\geq 2$ we have that $\mathrm{BH}(P,4)\subseteq\mathrm{BH}(P,2^r)$.
\begin{corollary}\label{cor:two-to-r}
For all positive integers $\kappa\leq 32$ and  $r\geq 2$, the set $\mathrm{BH}(2\kappa,2^r)$ is non-empty. 
\end{corollary}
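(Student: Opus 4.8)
The plan is to obtain the statement as an immediate consequence of Conjecture~\ref{conj:turyn} (in its verified range $\kappa\le 32$) together with the set inclusion $\mathrm{BH}(P,4)\subseteq\mathrm{BH}(P,2^r)$ noted just before the corollary. The first step is to make this inclusion precise at the level of the entry alphabet. Since $r\ge 2$, we have $4\mid 2^r$, and therefore for any $m$, $e^{\jmath 2\pi m/4} = e^{\jmath 2\pi (m2^{r-2})/2^r}$, which is a $2^r$-th root of unity. Hence $\mathbb{T}_4\subseteq\mathbb{T}_{2^r}$, and consequently $\mathbb{T}_4^{P\times P}\subseteq\mathbb{T}_{2^r}^{P\times P}$ for every $P$.

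Next I would observe that the defining property of a Butson--type Hadamard matrix, namely $\bm{\Upsilon}\bm{\Upsilon}^{\mathrm H}=P\mathbf{I}_P$, does not reference the complexity $R$ at all; the only role of $R$ is to constrain the entries to lie in $\mathbb{T}_R$. Therefore, if $\bm{\Upsilon}\in\mathrm{BH}(P,4)$, then its entries lie in $\mathbb{T}_4\subseteq\mathbb{T}_{2^r}$ and it still satisfies $\bm{\Upsilon}\bm{\Upsilon}^{\mathrm H}=P\mathbf{I}_P$, so $\bm{\Upsilon}\in\mathrm{BH}(P,2^r)$. This yields $\mathrm{BH}(P,4)\subseteq\mathrm{BH}(P,2^r)$ for all $r\ge 2$ and all $P$.

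Finally, I would invoke the fact stated below Conjecture~\ref{conj:turyn} that $\mathrm{BH}(2\kappa,4)$ is non-empty for all $\kappa\le 32$. Setting $P=2\kappa$ and combining with the inclusion from the previous step, $\mathrm{BH}(2\kappa,2^r)\supseteq\mathrm{BH}(2\kappa,4)\neq\emptyset$ for every $\kappa\le 32$ and every $r\ge 2$, which is exactly the claim. The argument is essentially bookkeeping about which root-of-unity alphabet the entries belong to; the only point that requires any care — and the closest thing to an obstacle — is confirming the divisibility $4\mid 2^r$ for all admissible $r$ (immediate, since $r\ge 2$), so that the alphabet inclusion, and hence the matrix-set inclusion, genuinely holds across the whole stated range of $r$.
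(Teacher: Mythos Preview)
Your proof is correct and follows exactly the same approach as the paper: the corollary is stated there as an immediate consequence of the inclusion $\mathrm{BH}(P,4)\subseteq\mathrm{BH}(P,2^r)$ for $r\ge 2$, combined with the verified range $\kappa\le 32$ of Conjecture~\ref{conj:turyn}. You have simply spelled out the alphabet inclusion $\mathbb{T}_4\subseteq\mathbb{T}_{2^r}$ in more detail than the paper does.
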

Next, in the following theorem we discuss the optimal solution.
\begin{theorem}
Let $1\leq K<64$ and $r\geq 2$, then the smallest value of $P$, for which there exists a solution for $\{\bm{\beta}_k\in\mathbb{T}_{2^r}^P\}_{k=0}^{K}$ that satisfies \eqref{eq:bVecCondition1} and \eqref{eq:bVecCondition} is  $ P^*=2\lceil(K+1)/2\rceil$. 
\end{theorem}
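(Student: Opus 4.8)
The plan is to prove the two inequalities $P^* \le 2\lceil(K+1)/2\rceil$ and $P^* \ge 2\lceil(K+1)/2\rceil$ separately, mirroring the structure used in Case \emph{III}.

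\textbf{Achievability.} Set $P = 2\lceil(K+1)/2\rceil$, which is the smallest even integer that is $\ge K+1$. Write $P = 2\kappa$ with $\kappa \le 32$ (this is where the hypothesis $K < 64$ is used, since $K+1 \le 64$ forces $\kappa \le 32$). By Corollary~\ref{cor:two-to-r}, $\mathrm{BH}(2\kappa, 2^r) = \mathrm{BH}(P, 2^r)$ is non-empty, so pick any $\bm{\Upsilon} \in \mathrm{BH}(P, 2^r)$. The rows of $\bm{\Upsilon}$ are mutually orthogonal and each has entries in $\mathbb{T}_{2^r}$; since $P \ge K+1$, there are at least $K+1$ of them. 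To also satisfy the normalization \eqref{eq:bMatCondition1}, note that multiplying any row of a BH matrix entrywise by a fixed element of $\mathbb{T}_{2^r}$ preserves membership in $\mathrm{BH}(P,2^r)$; applying this row-scaling I can force the first row to be $\mathbf{1}_P^\top$. Assigning $\bm{\beta}_k$ to the $k$th row of the resulting matrix for $k = 0, \dots, K$ then satisfies \eqref{eq:bVecCondition1}--\eqref{eq:bVecCondition}. Hence $P^* \le 2\lceil(K+1)/2\rceil$.

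\textbf{Converse.} Suppose $\{\bm{\beta}_k \in \mathbb{T}_{2^r}^P\}_{k=0}^K$ satisfies \eqref{eq:bVecCondition1}--\eqref{eq:bVecCondition}. Lemma~\ref{lemma:kPlus1} already gives $P \ge K+1$, so it remains only to rule out the case where $P$ is odd and $P = K+1$ (equivalently, to show $P$ cannot equal an odd integer strictly less than $2\lceil(K+1)/2\rceil$; combined with $P \ge K+1$ the only such value is $P = K+1$ when $K$ is even). Assume for contradiction that $P$ is odd. From \eqref{eq:bVecCondition} with $k = k' = 1$ we have $\bm{\beta}_1^{\mathrm{H}} \bm{\beta}_1 = P$, and from \eqref{eq:bVecCondition1}--\eqref{eq:bVecCondition} with $k=1, k'=0$ we get $\bm{\beta}_1^{\mathrm{H}} \mathbf{1}_P = 0$, i.e. $\sum_{p=0}^{P-1} \beta_{1,p} = 0$ with $\beta_{1,p} \in \mathbb{T}_{2^r}$. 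By Proposition~\ref{prob:1}, this forces $P \in \mathcal{W}(2^r)$. Since the only prime factor of $2^r$ is $2$, \eqref{eq:Wdef} gives $\mathcal{W}(2^r) = \{2m : m \in \mathbb{N}\} \setminus \{0\}$, the set of positive even integers. Thus $P$ must be even, contradicting the assumption that $P$ is odd. Therefore $P$ is even and $P \ge K+1$, which together give $P \ge 2\lceil(K+1)/2\rceil$.

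The main obstacle here is not conceptual but lies in making the converse argument airtight: one must observe that the only way $P$ can be smaller than $2\lceil(K+1)/2\rceil$ while still being $\ge K+1$ (by Lemma~\ref{lemma:kPlus1}) is for $P$ to be an odd number equal to $K+1$, and then it is the vanishing-sum constraint $\sum_p \beta_{1,p} = 0$ — feeding into Proposition~\ref{prob:1} and the explicit form of $\mathcal{W}(2^r)$ — that delivers the parity obstruction. I expect this to be the step requiring the most care; the achievability side is essentially a direct invocation of Corollary~\ref{cor:two-to-r} plus the elementary row-scaling trick to enforce \eqref{eq:bMatCondition1}.
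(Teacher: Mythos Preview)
Your proof is correct and follows essentially the same route as the paper's Appendix~\ref{app:Thm2}: achievability via Corollary~\ref{cor:two-to-r} together with a normalization of the first row, and the converse by combining Lemma~\ref{lemma:kPlus1} with the parity constraint $P\in\mathcal{W}(2^r)$ from Proposition~\ref{prob:1}. One small slip in the achievability: to force the first row of a BH matrix to equal $\mathbf{1}_P^\top$ you need \emph{column} scaling (multiply column $j$ by the conjugate of its first-row entry), not multiplication of a row by a single fixed element of $\mathbb{T}_{2^r}$; this is exactly Lemma~\ref{lemma:1}, which the paper invokes directly.
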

\begin{proof}
See Appendix \ref{app:Thm2} 
\end{proof}
The optimal solution can be generated based on the rows of a $\mathrm{BH}(P^*, 2^r)$ matrix satisfying \eqref{eq:bMatCondition1}.

\subsubsection{Case \emph{V}: $\mathrm{BH}(K+1,R)$ is non-empty}
We first introduce the following lemma from \cite{butson1962generalized} for later use.
\begin{lemma}\label{lemma:1}
If there exists $\mathbf{G}\in \mathbb{T}^{(K+1)\times P}_R$ that satisfies  \eqref{eq:bMatCondition2}, then there exists a matrix $\mathbf{B}\in \mathbb{T}^{(K+1)\times P}_R$ that satisfies both \eqref{eq:bMatCondition1} and \eqref{eq:bMatCondition2} and it can be generated as $ \mathbf{G} \bm{\Lambda}^{\mathrm{H}}$,
where $\bm{\Lambda} = \text{diag}([\mathbf{G}]_1)$.
\end{lemma}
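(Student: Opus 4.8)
The plan is to verify directly that $\mathbf{B} = \mathbf{G}\bm{\Lambda}^{\mathrm{H}}$ with $\bm{\Lambda} = \mathrm{diag}([\mathbf{G}]_1)$ does the job, checking in turn: (a) that $\mathbf{B}$ has entries in $\mathbb{T}_R$, (b) that $\mathbf{B}$ satisfies \eqref{eq:bMatCondition2}, and (c) that $\mathbf{B}$ satisfies the normalization \eqref{eq:bMatCondition1}. The key structural observation is that $\bm{\Lambda}$ is diagonal with diagonal entries drawn from $\mathbb{T}_R$ (since they are the entries of the first row of $\mathbf{G}$), hence $\bm{\Lambda}$ is unitary, $\bm{\Lambda}\bm{\Lambda}^{\mathrm{H}} = \mathbf{I}_P$, and right-multiplication by $\bm{\Lambda}^{\mathrm{H}}$ simply rescales each column of $\mathbf{G}$ by a unit-modulus $R$-th root of unity.

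First I would check (a): the $(i,j)$ entry of $\mathbf{B}$ is $[\mathbf{G}]_{i,j}\,\overline{[\mathbf{G}]_{1,j}}$, a product of two elements of $\mathbb{T}_R$; since $\mathbb{T}_R$ is a group under multiplication (closed under products and conjugation), this entry lies in $\mathbb{T}_R$, so $\mathbf{B}\in\mathbb{T}_R^{(K+1)\times P}$. Next, (b): using $\bm{\Lambda}^{\mathrm{H}}\bm{\Lambda} = \mathbf{I}_P$,
\begin{align}
\mathbf{B}\mathbf{B}^{\mathrm{H}} = \mathbf{G}\bm{\Lambda}^{\mathrm{H}}\bm{\Lambda}\mathbf{G}^{\mathrm{H}} = \mathbf{G}\mathbf{G}^{\mathrm{H}} = P\mathbf{I}_{K+1},
\end{align}
where the last equality is the hypothesis \eqref{eq:bMatCondition2} on $\mathbf{G}$. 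Finally, (c): the first row of $\mathbf{B}$ is $[\mathbf{G}]_1\bm{\Lambda}^{\mathrm{H}}$, whose $j$-th entry is $[\mathbf{G}]_{1,j}\,\overline{[\mathbf{G}]_{1,j}} = |[\mathbf{G}]_{1,j}|^2 = 1$ because $[\mathbf{G}]_{1,j}\in\mathbb{T}_R$ has unit modulus; hence $[\mathbf{B}]_1 = \mathbf{1}_P^\top$, which is \eqref{eq:bMatCondition1}.

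There is essentially no obstacle here — the lemma is a one-line normalization argument, and the only thing to be careful about is that the rescaling by $\bm{\Lambda}^{\mathrm{H}}$ preserves membership in $\mathbb{T}_R$ (which it does precisely because the scaling factors are themselves $R$-th roots of unity, not arbitrary phases) and that this rescaling, being a right multiplication by a diagonal unitary, leaves the Gram-type product $\mathbf{B}\mathbf{B}^{\mathrm{H}}$ unchanged. If one wanted to present the argument even more concretely, one could write $[\mathbf{B}]_{i,j} = [\mathbf{G}]_{i,j}/[\mathbf{G}]_{1,j}$ and observe that row $i$ of $\mathbf{B}$ is row $i$ of $\mathbf{G}$ with each coordinate phase-shifted by the negative of the corresponding first-row phase, which manifestly normalizes the first row to all ones while preserving pairwise orthogonality and squared norms.
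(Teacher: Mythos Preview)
Your proof is correct and follows exactly the approach the paper indicates: the paper's proof simply states that one should substitute $\mathbf{B} = \mathbf{G}\bm{\Lambda}^{\mathrm{H}}$ into \eqref{eq:bMatCondition1} and \eqref{eq:bMatCondition2}, and you have carried out that substitution explicitly, including the check that the entries remain in $\mathbb{T}_R$.
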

\begin{proof}
The proof follows by simply substituting $\mathbf{B}$ into \eqref{eq:bMatCondition1} and \eqref{eq:bMatCondition2}.
\end{proof}
 Let now $\mathbf{A}\in\mathrm{BH}(K+1,R)$. Then by using Lemma~\ref{lemma:1} we can generate matrix $\mathbf{B}$ whose rows represent a solution for $\bm{\beta}_k$ in $\mathbb{T}_B^{K+1}$. Therefore, we have that $P^* \leq K+1$. However, based on Lemma~\ref{lemma:kPlus1}, we have  $P^* \geq K+1$, and thus $P^* = K+1$.

\subsection{General case}
We  note that based on Lemma~\ref{lemma:1}, we can relax the condition \eqref{eq:bMatCondition1} and only search for matrices $\mathbf{B} \in \mathbb{T}^{(K+1)\times P}_R$ that satisfies \eqref{eq:bMatCondition2}. We call the set of such matrices partial BH  (PBH) matrices of order $(K+1,P)$ and complexity $R$ and denote it by $\mathrm{PBH}(K+1,P,R)$. It is evident that $\mathrm{PBH}(P,P,R) =\mathrm{BH}(P,R)$. Thus, since finding BH matrices, in general, is an open problem, so is finding PBH ones. 
One way to generate a $\mathrm{PBH}(K+1,P,R)$ is by selecting the first $K+1$ rows of a $\mathrm{BH}(P,R)$ matrix with $P\geq K+1$. In what follows we present a simple method to generate this BH matrix using the \ac{dft} matrices.

We define the set 
\begin{align}
    \mathcal{V}(R) = \Big\{\prod_{i=1}^Lp_i^{m_i}\geq K+1 \big\vert\,  m_i\in\mathbb{N} \Big\}
\end{align}
and determine $\tilde{v} = \min\{\mathcal{V}(R)\}  = \prod_{i=1}^Lp_i^{\tilde{m}_i}$.
Then 
$
    \mathbf{S} = \mathbf{F}_{p_1}^{\otimes \tilde{m}_1} \otimes \dots \otimes \mathbf{F}_{p_L}^{\otimes \tilde{m}_L}
$
is a $\mathrm{BH}(\tilde{v},R)$ matrix that satisfies \eqref{eq:bMatCondition1}. Therefore, one can set $\bm{\beta}_k^{\top}$ to the first $K+1$ rows of $\mathbf{S}$. In general, this solution may not be optimal. However, if $p_1\geq K+1$ then $\mathbf{S} = \mathbf{F}_{p_1}$, which, as proved in Sec.\,\ref{sec:caseII}, is the optimal solution.

\subsection{Summary}
We have shown for 5 special cases that optimal solutions can be found and provided a method to find such solution. For the general case, we can only provide a constructive method for finding feasible solutions, though without guarantee of optimality. 
\section{Simulation results}

\begin{figure}
    \centering
    \begin{tikzpicture}
    \node  [anchor=mid] (pic) {\includegraphics[width=5cm]{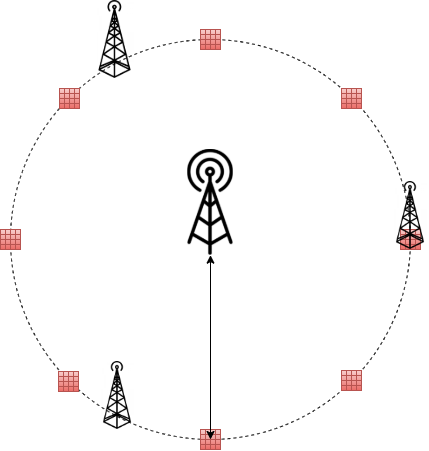}};
    \node (dist) at (.4, 1.4) {\footnotesize $10$m};
    \node (UE) at (2, 4) {\footnotesize RIS};
    \node (Rx) at (-1.5, 4.8) {\footnotesize Rx};
    \node (Bs) at (.4, 3) {\footnotesize BS};
    \end{tikzpicture}
    \caption{System setup with $8$ \acp{ris} and three Rxs located on a circle of radius $10$ m around the BS. The \acp{ris}  and Rxs are on $z=-3$~m and  $z=1$~m planes, respectively and BS is at origin.}
    \label{fig:system}
\end{figure}

\begin{figure}
    \centering
    \input{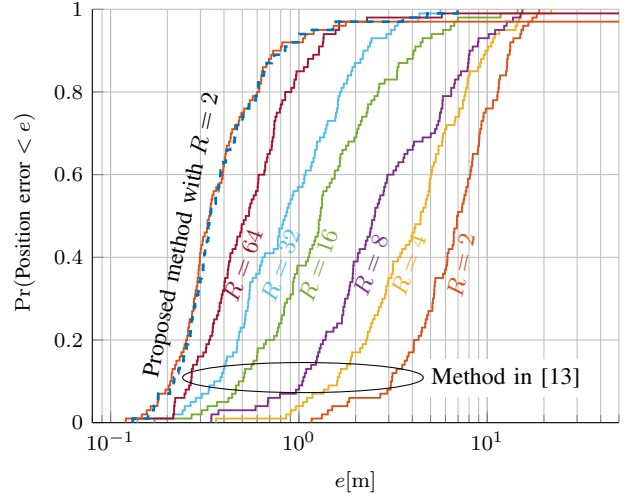}
    \caption{ Cumulative distribution of root mean squared error ($e$ in meter) of the estimator over $100$ random realizations of RIS phase profiles. The results are presented for our method with  one-bit RIS phase resolution ($R=2$) and for the  method in \cite{keykhosravi2021semi}, followed by quantization with $R\in\{1,4,\dots,64\}$.  The theoretical lower bound is shown by the dashed line. }
    \label{fig:PebResults}
\end{figure}

In this section, we illustrate the effectiveness of the proposed method in mitigating the interpath interference in a semi-passive localization scenario from \cite{keykhosravi2021semi}. 
\subsection{Scenario}
In this setup we consider $K=8$  \acp{ris}, one \ac{bs} transmitting pilot signals, and $3$ receivers that receive the signal directly from the \ac{bs} and also the reflected signal from the \acp{ris}. The placement of each of the elements is depicted in Fig.\,\ref{fig:system}. We consider the transmission of \ac{ofdm} signals from the \ac{bs}. The parameters of the signal and noise can be found in \cite[Table\,I]{keykhosravi2021semi}. 
We set $Q=1$ and select $\bm{\zeta}_{k,0}$ randomly. 
We then estimate the location of each of the  $K$ \acp{ris} as follows. We first resolve the interpath interference at each \ac{rx} via  orthogonal coding of the \ac{ris} phase profiles as described in Section\,\ref{sec:RIS_phase_profile_design}. Then, we calculate the \ac{toa} of all the $K+1$ paths ariving at each of the \acp{rx} (see \cite[Sec.\,III-B]{keykhosravi2021semi}). Finally, we estimate the RIS location based on the calculated \acp{toa} (see \cite[III-C]{keykhosravi2021semi}).
%
\subsection{Fast-varying phase profiles}
With regard to the fast-varying part of the \ac{ris} phase profile (i.e, $\bm{\beta_k}$), we consider two different methods:
\begin{itemize}
    \item \emph{Proposed method:} The first method is based on our proposal in Section\,\ref{sec:fast_varying}, where we assume that \acp{ris} have one bit of phase resolution (i.e., $R=2$). Based on our results in Section\,\ref{sec:case:R2} we set $P=12$ and we let $\bm{\beta_k}$ be the rows of a $\mathrm{BH}(2,12)$ matrix (or equivalently, a $12\times 12$ Hadamard matrix).
    \item \emph{Benchmark:} With the second method, we ignore the limited phase resolution and  design  $\bm{\beta_k}$ as in Sec.\,\ref{sec:infb} and  \cite{keykhosravi2021semi}. For a fair comparison, we consider the same number of transmission as in the proposed method, i.e., $P=12$. Therefore, we set $\bm{\beta_k}$ to be the rows of the matrix $\mathbf{F}_{12}$. We consider six different phase resolutions $R\in\{2,4,8,16,32,64\}$. For each $R$, we map the values of $\beta_{k,p}$ (obtained from $\mathbf{F}_{12}$) to the closest point in $\mathbb{T}_R$ to account for limited phase resolution of the \acp{ris}. Such quantization impairs the orthogonality of $\bm{\beta_k}$. Therefore, we expect that the estimator cannot avoid interpath interference and its accuracy deteriorates. 
\end{itemize}

\subsection{Results and discussion}

In Fig.\,\ref{fig:PebResults}, we demonstrate the \ac{cdf} of the \ac{ris} position estimation error for $100$ random realizations of $\bm{\zeta}_{k,0}  \in \mathbb{T}^{N_{\mathrm{ris}}}_R$. We present the results for the \ac{ris} located at $(10, 0, -3)$.
As it can be seen, the performance degradation due to quantization is more severe with low values of $R$. On the other hand, our method takes into account the limited resolution of the \ac{ris} phase shifts and therefore, avoids any quantization effects.  It can be seen that via our proposed method, with only one-bit resolution, the estimator can attain the theoretical lower bound, which is calculated through \ac{fim} analysis \cite[Chapter~3]{kay1993fundamentals}.

\section{Conclusion}
We proposed a method to resolve interpath interference in wireless systems equipped with multiple \acp{ris} with limited phase resolution. We did so by dividing the \ac{ris} phase profiles into a fast-varying and a slow-varying part and designing the former part to be orthogonal across \acp{ris}. We proposed a method to find such orthogonal sequences in general and discussed the special cases where such sequences are optimal (have the shortest length). Our solution can be applied to 
most system models considered in the literature. 
Future work can explore a trade-off between the interpath interference and number of transmissions ($P$) and look for sequences ($\bm{\beta_k} \in \mathbb{T}_R^P$) with minimum correlation for a given $P$.

\appendices
\section{Proof of Theorem 1} \label{app:Thm1}
Let $U = 4\lceil(K+1)/4\rceil$ and $P^*$ be the smallest value of $P$ that yields a solution.
From Conjecture\,\ref{conj:1} and Lemma~\ref{lemma:1}, we know that matrix $\mathbf{B}$ exists such that $\mathbf{B}\in \mathrm{BH}(U, 2)$ and $[\mathbf{B}]_1$ satisfies \eqref{eq:bVecCondition1}. Therefore, the first $K+1$ rows of $\mathbf{B}$ provide a solution for $\bm{\beta}_k$ (note that the dimension of $\mathbf{B}$ is at least $K+1$). Furthermore, from Lemma\,\ref{lemma:kPlus1} we know that $P^*\geq K+1$. 
To complete the proof, we need to  show that any $P$ that yields a solution for $\bm{\beta}_k$ must be a factor of $4$. The proof is in the same lines as the proof of Proposition~\ref{prop:2} in \cite[Sec.\,1.3.2]{laclair2016survey} and we present it here for completeness. 

 Based on \eqref{eq:bVecCondition1} and \eqref{eq:bVecCondition}, we have that $\bm{\beta}_2^\top\mathbf{1}_P=0$, therefore, $P$ should be even and  $\bm{\beta}_2$ contain equal number of $1$ and $-1$. Without loss of generality we can assume that  $\bm{\beta}_2=[\mathbf{1}_{P/2}^\top,-\mathbf{1}_{P/2}^\top]$. Then since $\bm{\beta}_3^\top\bm{\beta}_2=0$ we have that $\Sigma_1-\Sigma_2=0$, where $\Sigma_1$ ($\Sigma_2$) is the summation of all the elements in the first (second) half of vector $\bm{\beta}_3$. Furthermore, since $\bm{\beta}_3^\top\bm{\beta}_1=0$, we get $\Sigma_1+\Sigma_2=0$. Therefore, we have $\Sigma_1=\Sigma_2=0$ which proves that each half of the vector $\bm{\beta}_3$ must contain equal number of $1$ and $-1$. Therefore, the length of $\bm{\beta}_3$, which is $P$, must be divisible by $4$.

\section{Proof of Theorem 2} \label{app:Thm2}
Let $U=2\lceil(K+1)/2\rceil$ and $P^*$ be the smallest value of $P$ that yields a solution. Based on Corollary\,\ref{cor:two-to-r} and Lemma\,\ref{lemma:1}, there exists $\mathbf{B}\in \mathrm{BH}(U,2^r)$ whose first $K+1$ rows determine a solution for $\{\bm{\beta}_k\in\mathbb{T}_{2^r}^P\}_{k=0}^{K}$ (note that $U\geq K+1$). Therefore, $P^*\leq U$. 
Furthermore, from \eqref{eq:bVecCondition1} and  \eqref{eq:bVecCondition}, we have that  $\bm{\beta}_1^\top\mathbf{1}_{P^*} = 0$. Therefore, from Proposition\,\ref{prob:1} we have that $P^*\in \mathcal{W}(2^r)$, which is the set of positive even numbers (see \eqref{eq:Wdef}). Also, from Lemma\,\ref{lemma:kPlus1} we know that $P^*\geq K+1$. Therefore, $P^*$ must be larger than or equal to the smallest even number lager than $K+1$, that is $U$.

\vspace{-.5cm}

\bibliographystyle{IEEEtran}
\bibliography{references}

\end{document}